\newtheorem{prop}{Proposition}
\newtheorem{thm}[prop]{Theorem}
\theoremstyle{definition}
\newtheorem{defi}[prop]{Definition}
\newcommand{\F}{\mathbb{F}}
\newcommand{\N}{\mathbb{N}}
\newcommand{\lm}{\mathrm{lm}}
\newcommand{\lt}{\mathrm{lt}}
\newcommand{\Lp}{\mathcal{L}_q(x,q^m)}
\title{ Gr\"obner Bases for Linearized Polynomials}
\author{
Margreta Kuijper and Anna-Lena Trautmann\\
Department of Electrical and Electronic Engineering\\ University of Melbourne, Australia.}
\date{}
\begin{document}

\maketitle

\begin{abstract}
In this work we develop the theory of Gr\"obner bases for modules over the ring of univariate linearized polynomials with coefficients from a finite field.
\end{abstract}

\section{Introduction}

Gr\"obner bases \cite{bu92} are a powerful conceptual and computational tool for modules over general multivariate polynomial rings. In particular, they also prove useful for modules over univariate polynomials with coefficients from a finite field. 
This motivates us to develop similar tools for modules over linearized polynomials, a special family of polynomials over a finite field, in an analogous manner. 
For more information on Gr\"obner bases for modules over finite field polynomial rings the interested reader is referred to \cite{ad94b}.

Let $\F_q$ denote the finite field with $q$ elements, where $q$ is a prime power, and let $\F_{q^m}$ denote the extension field of extension degree $m$. A \emph{linearized polynomial} over $\F_{q^m}$ is of the form
\[f(x) = \sum_{i=0}^n a_i x^{[i]}, \quad a_i \in \F_{q^m},\]
where $[i]:=q^i$. 
If the base field needs to be specified, these polynomials are also called $q$-linearized. The name "linearized" stems from the fact that linearized polynomials function as $q$-linear maps. This class of polynomials was first investigated in \cite{or33} and later on by \cite{de78}. These polynomials  have received a lot of interest in the past decades due to their application in rank-metric codes \cite{de78,ga85a} and related topics.

The set of linearized polynomials, equipped with normal polynomial addition $+$ and polynomial composition $\circ$, forms a non-commutative ring without zero-divisors (see e.g.\ \cite{ko08}). We will denote this ring of $q$-linearized polynomials over $\F_{q^m}$ by $\Lp$.

Due to the difference of composition (for linearized polynomials) and multiplication (for classical polynomials), the theory of bases in general, and Gr\"obner bases in particular, needs to be developed from scratch for the ring $\Lp$.

The paper is structured as follows: In the next section we will investigate the structure of $\Lp^\ell$ as a left module. In Section \ref{sec:Groebner} we will derive the theory of Gr\"obner bases for submodules of $\Lp^\ell$. We conclude this work in Section \ref{sec:conclusion}.


\section{The Module $\Lp^\ell$}

As mentioned before, $\Lp$ forms a ring with addition and composition. Hence $\Lp^\ell$ forms a right or left module, which are different due to the non-commutativity of $\circ$. In this work we will consider $\Lp^{\ell}$ as a left module and investigate its left submodules. The results then easily carry over to right modules.

 Elements of $\Lp^\ell$ are of the form 
$$f:=[f_1(x) \; \dots \; f_\ell (x)] = \sum_{i=1}^\ell f_i(x) e_i $$ 
where $f_i(x)=\sum_{j} f_{ij} x^{[j]} \in \Lp$ and $e_1,\dots,e_\ell$ are the unit vectors of length $\ell$. 
To avoid confusion we denote polynomials by $f(x)$, while vectors of polynomials are denoted by $f$. If we need to index polynomials, we use the notation $f_1(x),\dots,f_s(x)$, while for vectors of polynomials we will use the notation $f^{(1)}, \dots, f^{(s)}$. 
Analogous to polynomial multiplication on  $\F_{q^m}[x]^\ell$ we define for $h(x)\in \Lp$ the left operation
\[h(x)\circ f  :=  [h(f_1(x)) \; \dots \; h(f_\ell (x))]= \sum_{i=1}^\ell h(f_i(x)) e_i  .\]
The monomials of $f$ are of the form $x^{[k]} e_i$ for all $k$ such that $f_{ik}\neq 0$.

\begin{defi}
A subset $M\subseteq \Lp^\ell$ is a \emph{(left) submodule} of $\Lp^\ell$ if it is closed under addition and composition with $\Lp$ on the left. 
\end{defi}

\begin{defi}
Consider the non-zero elements $f^{(1)}, \dots, f^{(s)} \in \Lp^\ell$. 
We say that $f^{(1)}, \dots, f^{(s)}$ are \emph{linearly independent} if for any $a_1(x),\dots, a_s(x) \in \Lp$ 
\[\sum_{i=1}^s  a_i(x) \circ f^{(i)} = [\;0 \; \dots \; 0\; ] \quad \implies \quad a_1(x)=\dots =a_s(x) = 0. \]
A generating set of a submodule $M\subseteq \Lp^\ell$ is called a \emph{basis} of $M$ if all its elements are linearly independent.
\end{defi}

One can easily see that 
\[ B= \{x e_1, xe_2 \dots, x e_\ell \}\]
is a basis of  $\Lp^\ell$, thus $\Lp^\ell$ is a \emph{free} and \emph{finitely generated} module.

We need the notion of monomial order for the subsequent results, which we will define in analogy to \cite[Definition 3.5.1]{ad94b}.

\begin{defi}
A \emph{monomial order} $<$ on $\Lp^\ell$ is a total order on $\Lp^\ell$ that fulfills the following two conditions:
\begin{itemize}
\item    $ x^{[k]} e_i  < x^{[j]}\circ (x^{[k]} e_i) $ for any monomial $x^{[k]} e_i \in \Lp^\ell$ and $j\in\mathbb{N}_{>0}$. 
\item    If $x^{[k]} e_i < x^{[k']} e_{i'}$, then $x^{[j]}\circ (x^{[k]} e_i) < x^{[j]}\circ (x^{[k']} e_{i'} )$ for any monomials $x^{[k]} e_{i}, x^{[k']} e_{i'} \in \Lp^\ell$ and $j\in\mathbb{N}_0$. 
\end{itemize}
\end{defi}




%
An example of a monomial order on $\Lp^\ell$ is the weighted term-over-position monomial order in \cite{ku14}.
In the following we will not fix a monomial order. The results are general and hold for any chosen monomial order.
\begin{defi}
We can order all monomials of an element $f\in\Lp^\ell$ in decreasing order with respect to some monomial order. Rename them such that $x^{[i_1]}e_{j_1}> x^{[i_2]}e_{j_2}> \dots $. Then
\begin{enumerate}
\item the \emph{leading monomial} $\mathrm{lm}(f)=x^{[i_1]} e_{j_1}$ is the greatest monomial of $f$.
\item the \emph{leading position} $\mathrm{lpos}(f)={j_1}$ is the vector coordinate of the leading monomial.
\item the \emph{leading term} $\mathrm{lt}(f)=f_{j_1, i_1}x^{[i_1]}e_{j_1}$ is the complete term of the leading monomial.
\end{enumerate}
\end{defi}

\vspace{0.2cm}

In order to define minimality for submodule bases we need the following notion of reduction, in analogy to \cite[Definition 4.1.1]{ad94b}. 

\begin{defi}
Let $ f, h \in \Lp^\ell$ and let $F=\{f^{(1)},\dots,f^{(s)}\}$ be a set of non-zero elements of $\Lp^\ell$. 
We say that \emph{ $f$ reduces to $h$ modulo $F$ in one step} if and only if
\[h= f- (( b_{1}x^{[a_{1}]}) \circ f^{(1)}+ \dots + (b_{k}x^{[a_{k}]} ) \circ f^{(k)} )\]
for some $a_{1},\dots,a_{k}\in \N_0$ and $b_{1},\dots, b_{k}\in \F_{q^m}$, where
\[\mathrm{lm}(f) = x^{[a_{i}]}\circ \mathrm{lm}(f^{(i)}), \quad i=1,\dots,k , \quad \textnormal{ and }\]
\[\mathrm{lt}(f) =( b_{1}x^{[a_{1}]})\circ \mathrm{lt}(f^{(1)})+ \dots + (b_{k}x^{a_{[k]}})\circ\mathrm{lt}(f^{(k)}).\]
We say that $f$ is \emph{minimal} with respect to $F$ if it cannot be reduced modulo $F$.
\end{defi}

\begin{defi}
A module basis $B$ is called \emph{minimal} if all its elements $b$ are minimal with respect to $B\backslash \{b\}$.
\end{defi}

\begin{prop}\label{prop:lpos}\cite{ku14p}
Let $B$ be a basis of a module $M\subseteq\Lp^\ell$. Then $B$ is a minimal basis if and only if all leading positions of the elements of $B$ are distinct.
\end{prop}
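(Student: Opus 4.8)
The plan is to prove both directions by analyzing how the reduction relation interacts with leading positions. The key observation, which I would establish first as a small lemma or inline remark, is that for any monomial $x^{[a]}$ and any $g \in \Lp^\ell$ we have $\lpos(x^{[a]} \circ g) = \lpos(g)$; this follows from the second axiom of a monomial order, since composition by $x^{[a]}$ preserves the order of monomials and hence maps the leading monomial of $g$ to the leading monomial of $x^{[a]} \circ g$, without changing the vector coordinate (the $e_i$ component is untouched by composition on the left). Consequently, in the reduction step $h = f - \sum_i (b_i x^{[a_i]}) \circ f^{(i)}$, the condition $\lm(f) = x^{[a_i]} \circ \lm(f^{(i)})$ forces $\lpos(f^{(i)}) = \lpos(f)$ for every $i$ that actually occurs.

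For the ``only if'' direction I would argue by contraposition: suppose two distinct basis elements $b, b' \in B$ satisfy $\lpos(b) = \lpos(b')$. By the preceding observation, I want to show $b$ can be reduced modulo $B \setminus \{b\}$ using $b'$. Write $\lm(b) = x^{[k]} e_j$ and $\lm(b') = x^{[k']} e_j$. The obstacle here is the requirement that $\lm(b)$ be an actual composition multiple of $\lm(b')$, i.e. that $k \geq k'$; this need not hold for an arbitrary pair. I would resolve this by choosing, among all elements of $B$ sharing that leading position, one whose leading monomial has \emph{minimal} $q$-degree — call it $b'$ — and then reducing any other such element $b$ by it: since $k \geq k'$, set $a = k - k'$ and choose $b \in \F_{q^m}$ so that $(b\,x^{[a]}) \circ \lt(b') = \lt(b)$ (possible because $\Lp$ has no zero divisors and scalars are invertible, and composition $x^{[a]} \circ (c\, x^{[k']} e_j) = c^{[a]} x^{[k'+a]} e_j$, so one solves $b \cdot c^{[a]} = (\text{leading coeff of }b)$ for $b$). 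This exhibits a one-step reduction of $b$ modulo $B\setminus\{b\}$, contradicting minimality.

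For the ``if'' direction, assume all leading positions of elements of $B$ are distinct and suppose for contradiction that some $b \in B$ reduces modulo $B \setminus \{b\}$, say via elements $f^{(1)}, \dots, f^{(k)} \in B \setminus\{b\}$ with $\lm(b) = x^{[a_i]} \circ \lm(f^{(i)})$. By the leading-position observation, every $f^{(i)}$ has $\lpos(f^{(i)}) = \lpos(b)$, and since the $f^{(i)}$ are elements of $B$ distinct from $b$, this already contradicts the distinctness of leading positions as soon as $k \geq 1$ — and $k \geq 1$ holds because a reduction step by definition involves at least the cancellation of $\lt(b)$, which is nonzero. Hence $b$ is minimal with respect to $B \setminus \{b\}$, and since $b$ was arbitrary, $B$ is a minimal basis.

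I expect the main subtlety to be the ``only if'' direction: one must be careful that the chosen reductor $b'$ is genuinely a different basis element (handled by picking minimal $q$-degree among those with the shared leading position, and noting that if two elements had the same leading monomial one could still reduce), and that the scalar $b$ and exponent $a$ can always be chosen to match the leading term exactly, which relies on the explicit formula for composition of monomials and the invertibility of nonzero scalars in $\F_{q^m}$. The ``if'' direction is essentially immediate once the leading-position invariance under composition is in hand.
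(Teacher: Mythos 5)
Your proposal is correct and follows essentially the same route as the paper's proof: in both, the ``only if'' direction reduces the element with the greater leading monomial by one with a smaller leading monomial at the same position, and the ``if'' direction observes that any reductor must share the leading position of the element being reduced. Your version merely fills in the details (leading-position invariance under left composition by $x^{[a]}$, and the explicit choice of scalar and exponent) that the paper leaves implicit.
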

\begin{proof}
Let $B$ be minimal. If two elements of $B$ have the same leading position, the one with the greater leading monomial can be reduced modulo the other element, which contradicts the minimality. Hence, no two elements of a minimal basis can have the same leading position.

The other direction follows straight from the definition of reducibility and minimality of a basis, since if the leading positions of all elements are different, none of them can be reduced modulo the other elements.
\end{proof}

The property outlined in the following theorem is called the \emph{Predictable Leading Monomial (PLM) property}, a terminology that was introduced in  \cite{ku11} for modules in $\F_q[x]^\ell$ with respect to multiplication. For linearized polynomials it was formulated and proven in \cite{ku14j}.

\begin{thm}[PLM property,\cite{ku14j}]\label{thm:PLM}
Let $M$ be a module in $\Lp^\ell$ with minimal basis $B=\{b^{(1)},\dots,b^{(k)}\}$. Then for any $0\neq f \in M$, written as
\[f=a_1(x)\circ b^{(1)}+\dots + a_k(x)\circ b^{(k)} ,\]
where $a_1(x),\dots,a_k(x)\in \Lp$, we have
\[\mathrm{lm}(f) = \max_{1\leq i \leq k; a_i(x)\neq 0} \{\mathrm{lm}(a_i)\circ \mathrm{lm}(b^{(i)})\}  \]
where (with slight abuse of notation) $\mathrm{lm}(a_i(x))$ denotes the term of $a_i(x)$ of highest $q$-degree.
\end{thm}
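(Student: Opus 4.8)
The plan is to reduce the statement to a single observation about how the left operation $\circ$ interacts with leading monomials, and then to exploit that a minimal basis has pairwise distinct leading positions (Proposition~\ref{prop:lpos}) in order to rule out cancellation among the leading terms of the summands.

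First I would establish the auxiliary fact that for every $0\neq a(x)\in\Lp$ and every $0\neq g\in\Lp^\ell$ one has $\lm(a(x)\circ g)=\lm(a)\circ\lm(g)$, and in particular $\lpos(a(x)\circ g)=\lpos(g)$. To see this, write $a(x)=\sum_{t=0}^{d}a_tx^{[t]}$ with $a_d\neq0$ and expand $a(x)\circ g$ as an $\F_{q^m}$-linear combination of the monomials $x^{[t]}\circ(x^{[c]}e_r)=x^{[t+c]}e_r$, where $x^{[c]}e_r$ runs over the monomials of $g$; the coefficient attached to $x^{[t]}\circ(x^{[c]}e_r)$ is $a_t$ times a $q^{t}$-th power of the corresponding coefficient of $g$, hence is nonzero exactly when both factors are. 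The second monomial-order axiom gives $x^{[t]}\circ(x^{[c]}e_r)\leq x^{[t]}\circ\lm(g)$, and the first axiom, applied to the monomial $x^{[t]}\circ\lm(g)$ with $j=d-t$ (using $x^{[d]}=x^{[d-t]}\circ x^{[t]}$), gives $x^{[t]}\circ\lm(g)<x^{[d]}\circ\lm(g)$ whenever $t<d$; a short argument, using that within a fixed coordinate the order refines the $q$-degree, then shows that the monomial $x^{[d]}\circ\lm(g)=\lm(a)\circ\lm(g)$ is produced by exactly one pair $(t,x^{[c]}e_r)$ and thus survives, with nonzero coefficient, so it is $\lm(a(x)\circ g)$. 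This auxiliary fact is the linearized counterpart of the trivial identity $\deg(ab)=\deg a+\deg b$ over $\F_{q^m}[x]$, and I expect its proof---keeping track of the $q$-power twists of the coefficients and of both order axioms---to be the only genuinely technical part of the argument.

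With this in hand, put $N=\{\,i:a_i(x)\neq0\,\}$, which is nonempty because $f\neq0$, and for $i\in N$ set $\mu_i=\lm(a_i)\circ\lm(b^{(i)})$. By the auxiliary fact, $\mu_i=\lm\bigl(a_i(x)\circ b^{(i)}\bigr)$ and $\lpos(\mu_i)=\lpos(b^{(i)})$. Since $B$ is a minimal basis, Proposition~\ref{prop:lpos} tells us that $\lpos(b^{(1)}),\dots,\lpos(b^{(k)})$ are pairwise distinct, so the monomials $\mu_i$, $i\in N$, lie in pairwise distinct coordinates; in particular they are pairwise distinct and there is a unique index $i^{*}\in N$ with $\mu_{i^{*}}=\max_{i\in N}\mu_i$.

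It then remains to show that $\mu_{i^{*}}$ does not cancel in $f=\sum_{i\in N}a_i(x)\circ b^{(i)}$. On the one hand, every monomial occurring in $f$ occurs in at least one summand $a_i(x)\circ b^{(i)}$ and is therefore $\leq\mu_i\leq\mu_{i^{*}}$, so $\lm(f)\leq\mu_{i^{*}}$. On the other hand, if the monomial $\mu_{i^{*}}$ occurred in a summand $a_j(x)\circ b^{(j)}$ with $j\in N\setminus\{i^{*}\}$, then $\mu_{i^{*}}\leq\lm\bigl(a_j(x)\circ b^{(j)}\bigr)=\mu_j\leq\mu_{i^{*}}$, which would force $\mu_{i^{*}}=\mu_j$; but this is impossible, since these two monomials sit in the distinct coordinates $\lpos(b^{(i^{*})})$ and $\lpos(b^{(j)})$. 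Hence $\mu_{i^{*}}$ occurs only in the $i^{*}$-th summand, where it is the leading monomial and therefore has nonzero coefficient, so it occurs with nonzero coefficient in $f$ as well. Combined with $\lm(f)\leq\mu_{i^{*}}$ this yields $\lm(f)=\mu_{i^{*}}=\max_{i\in N}\mu_i$, which is the assertion.
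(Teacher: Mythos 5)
Your proof is correct. Note that the paper itself gives no proof of this theorem---it is imported from the cited reference \cite{ku14j}---so there is nothing in-text to compare against; your argument is a complete, self-contained substitute. The two ingredients you use are exactly the right ones: the multiplicativity of leading monomials, $\lm(a(x)\circ g)=\lm(a)\circ\lm(g)$, which you correctly derive from both monomial-order axioms together with the injectivity of the Frobenius twist on the coefficients (so the top coefficient $a_d\cdot(\mathrm{lc}(g))^{[d]}$ cannot vanish), and the fact from Proposition~\ref{prop:lpos} that a minimal basis has pairwise distinct leading positions, which places the candidate leading monomials $\mu_i$ in distinct coordinates and rules out cancellation. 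Your handling of the one genuinely delicate point---that a summand $a_j(x)\circ b^{(j)}$ has monomials in coordinates other than $\lpos(b^{(j)})$, so one must argue via the chain $\mu_{i^*}\leq\lm(a_j(x)\circ b^{(j)})=\mu_j\leq\mu_{i^*}$ and antisymmetry of the total order rather than by coordinates alone---is correct.
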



\section{Gr\"obner Bases for Submodules of $\Lp^\ell$}\label{sec:Groebner}

We will now investigate a special family of bases, called Gr\"obner bases, for submodules of $\Lp^\ell$.

\begin{defi}\label{defi10}
Let $M\subseteq \Lp^\ell$ be a submodule.
A subset $B\subset M$ is called a \emph{Gr\"obner basis} of $M$ if the leading terms of $B$ span a left module that contains all leading terms of $M$.
\end{defi}

It is well-known that a Gr\"obner basis of a module $M$ in $\F_q[x]^\ell$ (equipped with normal multiplication) generates $M$. We will now show the analog for linearized polynomials.
\begin{thm}
Let $M$ be a module in $\Lp^\ell$ with Gr\"obner basis $B$. Then $B$ generates $M$.
\end{thm}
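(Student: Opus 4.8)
The plan is to mimic the classical division-algorithm argument from the theory of Gr\"obner bases for $\F_q[x]^\ell$, adapted to composition of linearized polynomials. Let $B=\{b^{(1)},\dots,b^{(s)}\}$ be a Gr\"obner basis of $M$, and let $N$ be the submodule generated by $B$; clearly $N\subseteq M$, so it suffices to show $M\subseteq N$. Suppose not, and pick $0\neq f\in M\setminus N$ whose leading monomial $\lm(f)$ is minimal with respect to the given monomial order among all elements of $M\setminus N$ (such a minimal element exists because a monomial order on $\Lp^\ell$ is a well-order on the monomials, which follows from the two defining conditions: every strictly decreasing chain of monomials $x^{[k]}e_i$ must terminate since the exponents lie in $\N_0$ and positions in a finite set).

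First I would use the Gr\"obner basis property: since $\lt(f)$ is a leading term of $M$, it lies in the left module spanned by $\lt(b^{(1)}),\dots,\lt(b^{(s)})$. Concretely this means $\lm(f)=x^{[a]}\circ\lm(b^{(i)})$ for some index $i$ and some $a\in\N_0$, and the corresponding leading coefficients match up after scaling. Hence there is a monomial operator $b\,x^{[a]}$ with $b\in\F_{q^m}$ such that $\lt\bigl((b\,x^{[a]})\circ b^{(i)}\bigr)=\lt(f)$. Now set
\[g = f - (b\,x^{[a]})\circ b^{(i)}.\]
The key computation is that the leading terms cancel, so either $g=0$ or $\lm(g)<\lm(f)$. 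Here I must be slightly careful: composition on the left, $x^{[a]}\circ(\cdot)$, is additive and, by the monomial-order axioms, preserves and strictly scales the order of monomials, so $\lm\bigl((b\,x^{[a]})\circ b^{(i)}\bigr)=x^{[a]}\circ\lm(b^{(i)})=\lm(f)$ and all other monomials of $(b\,x^{[a]})\circ b^{(i)}$ are either equal to monomials of $f$ or are strictly below $\lm(f)$; after subtracting, the top term is annihilated and nothing larger is created.

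Now I would conclude by a descent argument. Since $f\in M$ and $(b\,x^{[a]})\circ b^{(i)}\in N\subseteq M$, we have $g\in M$. If $g=0$, then $f=(b\,x^{[a]})\circ b^{(i)}\in N$, contradicting $f\notin N$. Otherwise $g\neq 0$ with $\lm(g)<\lm(f)$; by minimality of $\lm(f)$ among leading monomials of elements of $M\setminus N$, we must have $g\in N$, and then $f = g + (b\,x^{[a]})\circ b^{(i)}\in N$, again a contradiction. Either way we reach a contradiction, so $M\setminus N=\emptyset$ and $M=N$, i.e.\ $B$ generates $M$.

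The main obstacle I anticipate is purely the bookkeeping around left composition: one must verify that $h(x)\circ(\cdot)$ behaves like scalar multiplication does in the classical setting — that it is additive on $\Lp^\ell$, that $x^{[a]}\circ\lm(b^{(i)})$ is genuinely the leading monomial of $x^{[a]}\circ b^{(i)}$ (this uses precisely the second axiom of a monomial order), and that subtracting the matched leading term cannot produce a monomial above $\lm(f)$. None of this is deep, but it is where the non-commutativity and the switch from multiplication to composition could hide a subtlety, so I would state these facts explicitly as a short preliminary observation before running the descent. Everything else is the standard Gr\"obner-basis reduction argument.
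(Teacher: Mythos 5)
Your overall strategy --- match the leading term of a minimal counterexample against a single basis element, subtract, and descend --- is essentially the same division/reduction argument the paper uses; the paper phrases it as ``reduce $f$ modulo $B$ until minimal and show the remainder is zero,'' which is your well-ordering descent in different packaging, and your bookkeeping about left composition preserving the order is correct. The one place you are too quick is the sentence ``Concretely this means $\lm(f)=x^{[a]}\circ\lm(b^{(i)})$ for some index $i$.'' The Gr\"obner hypothesis only gives you $\lt(f)=\sum_j h_j(\lt(b^{(j)}))$ with several basis elements potentially contributing, and it is not automatic that a single term lying in the left module generated by several leading terms is a monomial composite of one of them; this is the analogue of the classical fact that a monomial in a monomial ideal is divisible by some generator, and it is precisely what the paper spends the first half of its proof establishing. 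The missing argument runs as follows: only those $b^{(j)}$ with $\lpos(b^{(j)})=\lpos(f)$ can contribute, since terms in distinct positions cannot cancel; among these, every monomial of $h_j(b^{(j)}_{m_j}x^{[m_j]})$ has $q$-degree at least $m_j$ (composition only raises exponents), so the single surviving exponent of $\lt(f)$ is at least $m_-:=\min_j m_j$, and therefore $\lt(f)$ is a left monomial composite of the $\lt(b^{(i)})$ attaining that minimum. With that lemma stated and proved, your descent closes the argument exactly as the paper's reduction does.
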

\begin{proof}
Let $f\in M$ and $B=\{b^{(1)},\dots,b^{(k)}\} \subset \Lp^\ell$. 
Since $B$ is a Gr\"obner basis there exist $h_1(x),\dots,h_k(x) \in \Lp$ such that 
$$\mathrm{lt}(f) = \sum_{j=1}^k h_j(\mathrm{lt}(b^{(j)})) .$$
One sees that $\mathrm{lt}(f)$ can only be a combination of the elements of the Gr\"obner basis that have the same leading position as $f$. Without loss of generality assume that this is the case for $b^{(1)},\dots,b^{(k')}, k'\leq k$. Then 
$$\mathrm{lt}(f) = \sum_{j=1}^{k'} h_j(\mathrm{lt}(b^{(j)}))= \sum_{j=1}^{k'} h_j(b^{(j)}_{m_j}x^{[m_j]}e_{\mathrm{lpos}(f)}) ,$$
where $b^{(j)}_{m_j}x^{[m_j]}e_{\mathrm{lpos}(f)}$ is the leading term of $b^{(j)}$. 
Denote $m_- := \min\{m_j \mid j=1,\dots,k'\}$. Then 
$$\mathrm{lt}(f) = \sum_{j=1}^{k'} h_j(b^{(j)}_{m_j} x^{[m_j-m_-]}(x^{[m_-]}e_{\mathrm{lpos}(f)})) $$
$$= \left(\sum_{j=1}^{k'} h_j(b^{(j)}_{m_j} x^{[m_j-m_-]} ) \right)  \circ (x^{[m_-]}e_{\mathrm{lpos}(f)})$$
and thus $x^{[m_-]}e_{\mathrm{lpos}(f)}$ symbolically divides $\mathrm{lt}(f) $. Furthermore, there exists $1\leq i\leq k'$ such that $x^{[m_-]}e_{\mathrm{lpos}(f)}=\mathrm{lm}(b^{(i)})$.

Now reduce $f$ modulo $G$ until it is minimal and call the resulting vector $r\in \Lp^\ell$. Hence there exist $h_1(x),\dots,h_k(x) \in\Lp$ such that
\[f-r = \sum_{i=1}^k h_i(b^{(i)})\]
which implies that $f-r \in M$. If $r=0$, then $f= \sum_{i=1}^k h_i(b^{(i)}) $. We will now show by contradiction that $r\neq 0$ is not possible.
If $r\neq 0$ then $r = f- \sum_{i=1}^k h_i(b^{(i)})  \in M$, since $f\in M$. Then, by the first part of the proof, there exists $h(x)\in \Lp$ and $1\leq i\leq k$ such that 
$$\mathrm{lt}(r) = h(\mathrm{lm}(g_i))$$
which means that $r$ could be further reduced modulo $G$, which contradicts the minimality assumption. Thus, we have shown that any $f\in M$ can be generated by the elements of $B$.
\end{proof}

Thus, we have shown that any Gr\"obner basis of a module is actually a basis of this module. Clearly, the other way around is not true, i.e.\ not every basis is a Gr\"obner basis, but for minimal bases the reverse implication also holds, as shown in the following.

\begin{thm}\label{thm:minimal}
Any minimal basis $B$ of a module $M\subseteq \Lp^\ell$ is a minimal Gr\"obner basis of $M$.
\end{thm}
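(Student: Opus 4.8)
The plan is to deduce the Gr\"obner basis property directly from the PLM property (Theorem~\ref{thm:PLM}) together with the leading-position characterization of minimal bases (Proposition~\ref{prop:lpos}). First I would record that, since $B=\{b^{(1)},\dots,b^{(k)}\}$ is a minimal basis, Proposition~\ref{prop:lpos} guarantees that the leading positions $\lpos(b^{(1)}),\dots,\lpos(b^{(k)})$ are pairwise distinct. To verify Definition~\ref{defi10} I must then show that for every $0\neq f\in M$ the leading term $\lt(f)$ lies in the left module generated by $\lt(b^{(1)}),\dots,\lt(b^{(k)})$.

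Next I would take an arbitrary $0\neq f\in M$ and, using that the minimal basis $B$ generates $M$, write $f=a_1(x)\circ b^{(1)}+\dots+a_k(x)\circ b^{(k)}$ with $a_i(x)\in\Lp$. Applying Theorem~\ref{thm:PLM} gives $\lm(f)=\max_{i:\,a_i(x)\neq 0}\{\lm(a_i)\circ\lm(b^{(i)})\}$. Since left composition by a $q$-monomial does not alter the vector coordinate, the monomial $\lm(a_i)\circ\lm(b^{(i)})$ sits in position $\lpos(b^{(i)})$; as these positions are distinct, the monomials appearing in the maximum are pairwise distinct, so the maximum is attained at a \emph{unique} index $i_0$. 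I would then argue that consequently there is no cancellation in the top term, so that $\lt(f)=\lt(a_{i_0})\circ\lt(b^{(i_0)})$, and writing $\lt(a_{i_0})=\lambda x^{[a]}$ with $\lambda\in\F_{q^m}$ exhibits $\lt(f)=(\lambda x^{[a]})\circ\lt(b^{(i_0)})$ as an element of the left module spanned by the leading terms of $B$. Since $f$ was arbitrary, this proves that $B$ is a Gr\"obner basis of $M$. The minimality of $B$ as a Gr\"obner basis is then inherited directly from the hypothesis that $B$ is a minimal basis: by Proposition~\ref{prop:lpos} the leading positions are distinct, so no element of $B$ is reducible modulo the others.

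The only delicate point I anticipate is the passage from the statement about leading \emph{monomials} furnished by the PLM property to the statement about leading \emph{terms} required by Definition~\ref{defi10}, i.e.\ ruling out that the leading coefficients of the top contributions cancel. This is exactly where distinctness of the leading positions does the work: it forces the maximizing index $i_0$ to be unique, so the top term of $f$ originates from the single summand $a_{i_0}(x)\circ b^{(i_0)}$ and no cancellation is possible. Everything else amounts to routine bookkeeping with the twisted product $(\lambda x^{[a]})\circ(\mu x^{[m]})=\lambda\mu^{[a]}x^{[a+m]}$.
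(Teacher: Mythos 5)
Your proof is correct and follows essentially the same route as the paper: write $f$ in terms of the minimal basis and apply the PLM property (Theorem~\ref{thm:PLM}) to place $\lm(f)$, and hence $\lt(f)$, in the left module generated by the leading terms of $B$. The only difference is that you spell out the passage from leading monomials to leading terms (uniqueness of the maximizing index via Proposition~\ref{prop:lpos}, hence no cancellation), a detail the paper's proof elides with ``and thus also $\lt(f)$''.
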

\begin{proof}
Let $f\in M$. Since any minimal basis of a module in $\Lp^\ell $ has at most $\ell $ elements, we can assume $B=\{b^{(1)},\dots, b^{(\ell')}\}$, where $\ell' \leq \ell$. There exist $a_1(x), \dots, a_{\ell'}(x) \in \Lp$ such that $\sum_i a_i(x)\circ b^{(i)} = f$. Then by Theorem \ref{thm:PLM} 
\[\mathrm{lm}(f) = \max_{1\leq i \leq \ell'; a_i\neq 0} \{\mathrm{lm}(a_i)\circ \mathrm{lm}(b^{(i)})\}  ,\]
i.e.\ $\lm(f)$ and thus also $\lt(f)$ is in the module spanned by all $\lm(b^{(i)})$, $i=1,\dots,\ell'$.
\end{proof}

Finally, we show the existence of Gr\"obner bases of modules in $\Lp^\ell$.

\begin{thm}\label{prop:lposG}
For any module $M\subseteq \Lp^\ell$ there exists a finite minimal  Gr\"obner basis. 
\end{thm}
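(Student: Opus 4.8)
The plan is to reduce the existence statement to two facts already available in the excerpt, namely Proposition~\ref{prop:lpos} (a basis is minimal iff its leading positions are distinct) and Theorem~\ref{thm:minimal} (every minimal basis is a minimal Gr\"obner basis). Since $\Lp^\ell$ is free and finitely generated by $B=\{xe_1,\dots,xe_\ell\}$, and $M$ is a submodule, the first task is to show that $M$ itself is finitely generated, with a generating set of size at most $\ell$. Once we know $M$ has a finite basis, we can run a ``minimization'' procedure on it: repeatedly, if two basis elements share a leading position, subtract a suitable $bx^{[a]}\circ(\cdot)$ multiple of the one with the smaller leading monomial from the one with the larger (exactly the one-step reduction of the excerpt) to strictly decrease the larger leading monomial while keeping the span fixed; iterate. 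When no two leading positions coincide, Proposition~\ref{prop:lpos} says the basis is minimal, and Theorem~\ref{thm:minimal} upgrades it to a minimal Gr\"obner basis. So the proof has two pillars: finite generation of $M$, and termination of the minimization procedure.

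For finite generation I would argue directly using the monomial order and leading positions rather than invoking abstract Noetherianity. Partition $M\setminus\{0\}$ according to leading position $j\in\{1,\dots,\ell\}$. For each $j$ that actually occurs, consider the set $S_j$ of $q$-degrees of leading monomials of elements of $M$ with $\lpos=j$; pick $f^{(j)}\in M$ whose leading monomial $x^{[m_j]}e_j$ has $m_j=\min S_j$. I claim $B'=\{f^{(j)}\}$ generates $M$: given any nonzero $f\in M$ with leading position $j$, its leading monomial is $x^{[m]}e_j$ with $m\ge m_j$, so as in the proof of the previous theorem $x^{[m_j]}e_j=\lm(f^{(j)})$ symbolically divides $\lm(f)$, and one subtracts an appropriate $bx^{[m-m_j]}\circ f^{(j)}$ to strictly lower $\lm(f)$ (or kill $f$). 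Since the monomial order is a total order and, by the first axiom of a monomial order together with the PLM/leading-position structure, any strictly decreasing chain of leading monomials in a fixed position terminates, this reduction process stops, expressing $f$ as a $\Lp$-combination of $B'$. Thus $M$ is finitely generated by at most $\ell$ elements, and by discarding dependent elements (or noting distinct leading positions already force independence once we also minimize) we get a basis.

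The main obstacle is precisely the termination/well-foundedness argument: I must be careful that the relevant descending chains of monomials are finite. This needs the monomial order axiom $x^{[k]}e_i < x^{[j]}\circ(x^{[k]}e_i)$, which implies that within a fixed leading position the order on $\{x^{[k]}e_i : k\in\N_0\}$ agrees with the order on exponents $k\in\N_0$, hence is well-ordered; so there is no infinite strictly decreasing sequence of leading monomials sharing a position, and each minimization step strictly decreases the (finite multiset of) leading monomials or removes an element, of which there are at most $\ell$. A secondary point to handle cleanly is that the minimization steps do not change the submodule generated: each step replaces one generator $g$ by $g - bx^{[a]}\circ g'$ with $g'$ another current generator, which is an invertible change of generating set over $\Lp$, so the span is preserved throughout.

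Putting it together: start from the finite generating set of $M$ constructed above (size $\le\ell$), apply minimization until all leading positions are distinct (finitely many steps, since the total number of generators is bounded by $\ell$ and each step strictly decreases a leading monomial in a well-ordered set), obtain a minimal basis by Proposition~\ref{prop:lpos}, and conclude by Theorem~\ref{thm:minimal} that this minimal basis is a minimal Gr\"obner basis of $M$. Hence every submodule of $\Lp^\ell$ has a finite minimal Gr\"obner basis.
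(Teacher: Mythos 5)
Your construction is exactly the paper's: for each occurring leading position $j$, select an element $f^{(j)}\in M$ whose leading monomial $x^{[m_j]}e_j$ is minimal in that position, and take these (at most $\ell$) elements as the candidate basis. Where you diverge is in the verification. The paper checks Definition~\ref{defi10} directly: it shows by the explicit computation $c_i x^{[j_i]}e_i=\bigl(\tfrac{c_i}{c_m^{[j_i-j_m]}}x^{[j_i-j_m]}\bigr)\circ\lt(f_{\min,i})$ that every leading term of $M$ lies in the left module spanned by the $\lt(f_{\min,i})$, and then notes finiteness and distinct leading positions. You instead prove that the set \emph{generates} $M$ by iterated one-step reduction plus a well-foundedness argument, observe that distinct leading positions give linear independence (hence a basis, minimal by Proposition~\ref{prop:lpos}), and invoke Theorem~\ref{thm:minimal}. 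Both routes are valid; the paper's is shorter because it never needs termination of division, while yours is more self-contained in that it does not lean on Theorem~\ref{thm:minimal} having already done the Gr\"obner work. Two small points to tighten: (i) your ``minimization phase'' is vacuous, since the constructed set already has pairwise distinct leading positions by design; (ii) the descending chain of leading monomials produced during reduction need not stay in a single position, so you should add the (easy) remark that a total order which is well-founded on each of the $\ell$ position classes is well-founded on their union, which is what actually guarantees termination.
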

\begin{proof}
Without restriction assume that $M$ contains elements with leading position $i$ for all $i\in\{1,\dots,\ell\}$.  
Define $f_{\min, i}$ as the (non-unique) $f\in M$ with $\mathrm{lpos}(f)=i$ whose leading monomial is minimal, for $i=1,\dots, \ell$. Then $B=\{f_{\min, 1},\dots,f_{\min, \ell}\}$ forms a Gr\"obner basis of $M$, since any leading term of $M$ is an element of the module generated by the leading terms of $B$. To see this, denote an arbitrary leading term of $M$ by $c_i x^{[j_i]} e_i$ and $\lt(f_{\min, i})= c_m x^{[j_m]} e_i$; then $j_i\geq j_m$ and
\[c_i x^{[j_i]} e_i    
= \left(\frac{c_i}{c_m^{[j_i-j_m]}}x^{[j_i-j_m]}\right)\circ \lt(f_{\min, i}).\]
Clearly, $B$ is finite and the leading positions of all its elements are distinct.
\end{proof}


%

%


\section{Conclusion}\label{sec:conclusion}

Gr\"obner bases for modules over $\F_q[x]$ are well-known and have been extensively studied. In this work we have translated some of the definitions and results of Gr\"obner bases from the polynomial ring $\F_q[x]$, equipped with multiplication, to the linearized polynomial ring $\Lp$, equipped with composition. It turns out, that, despite the different operation used in the ring of linearized polynomials, all results covered in this work hold in both settings.


\bibliographystyle{plain}
\bibliography{margreta_anna-lena}

\end{document}